\documentclass[a4paper,USenglish,thm-restate]{lipics-v2021}

\pdfoutput=1
\hideLIPIcs
\nolinenumbers
\usepackage{cite}

\title{Towards an HRS Category in TermCOMP}

\author{Johannes Niederhauser}{University of Innsbruck, Innsbruck, Austria}
{johannes.niederhauser@uibk.ac.at}{https://orcid.org/0000-0002-8662-6834}{}
\author{Aart Middeldorp}{University of Innsbruck, Innsbruck, Austria}
{aart.middeldorp@uibk.ac.at}{https://orcid.org/0000-0001-7366-8464}{}

\authorrunning{J. Niederhauser and A. Middeldorp} 

\Copyright{Johannes Niederhauser and Aart Middeldorp}

\ccsdesc{Theory of computation~Equational logic and rewriting}
\ccsdesc{Theory of computation~Lambda calculus}

\keywords{higher-order rewriting, termination}

\acknowledgements{We thank Cynthia Kop as well as the anonymous reviewers
for their valuable and extensive feedback.}

\usepackage{definitions}

\begin{document}

\maketitle

\begin{abstract}
We show that there is a simple syntactically-defined subclass of
higher-order benchmarks in the termination problem database for
which rewriting according to Nipkow's higher-order rewrite systems
(HRSs) and rewriting according to a beta-first strategy in the
semantics of TermCOMP's higher-order category coincide. This lays
the formal foundation for an HRS (sub)category in TermCOMP which
would allow more tools to compete against each other.
\end{abstract}

\section{Introduction}

Higher-order rewriting comes in many different flavors. Popular
choices include Jouannaud and Okada's algebraic functional systems
(AFSs) \cite{JO91,JR99} and higher-order rewrite systems (HRSs) as
defined by Wolfram and Nipkow \cite{W93,MN98}. Both formalisms
consider terms of the simply-typed $\lambda$-calculus. AFSs use plain
matching and augment the rewrite system with a dedicated rule for
$\beta$-reduction whereas HRSs perform
rewriting modulo $\beta\eta$. In order to keep the matching problem and
therefore rewriting with HRSs manageable, left-hand sides are
typically restricted to Miller's patterns (PATs) \cite{M91}, resulting in
the subclass of pattern rewrite systems (PRSs) \cite{MN98}. Recently, we
introduced another subclass of HRSs by extending PRSs to deterministic
higher-order pattern rewrite systems (DPRSs) \cite{NM25b}. DPRSs
use Yokoyama et al.'s deterministic higher-order patterns (DHPs) as
left-hand sides. Like PATs, DHPs enjoy a deterministic matching problem of
linear complexity \cite{YHT04}.

Since 2010, a category for higher-order rewriting has been part of the
annual international termination competition (TermCOMP).%
\footnote{\url{https://termination-portal.org/wiki/Termination_Competition}}
Except for a demonstration category for HRSs in 2018, the higher-order
category has always considered AFS semantics. In 2024, the
termination problem database (TPDB) was ported to the new ARI format
which is also used in the confluence competition.%
\footnote{\url{https://project-coco.uibk.ac.at/}} A discussion about
the higher-order format ensued, and a proposal by Kop was implemented.%
\footnote{\url{https://github.com/orgs/TermCOMP/discussions/86}} This
year, TermCOMP provides a reference describing the syntax and
semantics of its categories for the first time \cite{termcompref}. In
that document, Kop defines the semantics of the higher-order category
as a restricted version of her algebraic functional systems with
meta-variables (AFSMs) \cite{K12} which we shall refer to as
\emph{simply-typed meta-substitution rewrite systems} (STMRSs). The
definition of STMRSs is given such that other popular flavors
of higher-order rewriting can be accommodated easily. A subcategory
for AFSs is established by simply restricting the eligible benchmarks.
For PRSs, Kop defines a restricted class of STMRSs where
termination of the strategy which prefers $\beta$-reduction
coincides with universal termination of the corresponding PRS.
This idea is based on a result in Kop's PhD thesis \cite[Theorem 3.5]{K12}.

In this paper, we extend STMRSs such that
they can model a more general class of HRSs including our DPRSs. To
that end, we introduce the ad-hoc definition of
\emph{extended pattern rewrite systems} (EPRSs) which constitute a
generalization of DPRSs
where we do not enforce deterministic matching. EPRSs are not meant to
be useful in their own right, but they simplify the development in
this paper and allow us to focus on the properties of DPRSs which are
needed in order to obtain our main goal. Instead of just proving the
relationship between EPRSs and STMRSs with respect to termination,
we show that any step in an EPRS can
be modeled by a step followed by $\beta$-normalization in
STMRSs. This implementation result lays the foundation for comparing also
other properties besides termination. Furthermore, our result extends and
strengthens similar results in the literature such 
as Kop's implementation of PRSs in AFSMs \cite[Section 3.3]{K12}
as well as van Oostrom and van Raamsdonk's implementation of PRSs
in Klop's combinatory systems with explicit $\beta$ \cite{vOvR93}.

\section{Preliminaries}
\label{sec:preliminaries}

In this paper, we consider rewrite systems over
simply-typed $\lambda$-terms \cite{C40,BDS13}.
Given a set of \emph{sorts} $\xS$ ($\iota$, $\kappa$),\footnote{%
This means that in the remainder,
$\iota$ and $\kappa$ are used exclusively to represent arbitrary sorts.
We will continue to define naming conventions in this fashion.}
the set of
simple \emph{types} $\xT$ ($\sigma$, $\tau$) is defined as
$\xS \subseteq \xT$ and $\sigma \to \tau \in \xT$ whenever
$\sigma, \tau \in \xT$. We follow the usual convention that $\to$ is
right-associative, so all types can be denoted in a form
$\sigma_1 \to \cdots \to \sigma_n \to \iota$.
We assume an infinite
set of typed variables $\xV$ ($x$, $y$, $z$, $w$) such that there are
infinitely many variables of every type as well as a signature $\xF$
($f$, $g$) consisting of typed function symbols disjoint from
$\xV$. A \emph{head} ($h$) is either a variable or a function symbol.
We write $h : \sigma \in \xF$ ($h : \sigma \in \xV$) if $h$ is a head
of type $\sigma$. If this information is not
important, we sometimes omit the type and just write $h \in \xF$
($h \in \xV$). The set $\Lambda_\sigma$ of
\emph{simply-typed $\lambda$-terms} ($s$, $t$, $u$, $v$) of type $\sigma$
is defined inductively together with the set $H_\sigma$ of \emph{head-terms}
of type $\sigma$:
\begin{itemize}
\item
$h \in H_\sigma$ if $h : \sigma \in \xF \cup \xV$,
\item
$\lambda x.s \in H_{\tau \to \sigma}$ if $x : \tau \in \xV$ and
$s \in \Lambda_\sigma$, and
\item
$s t_1 \cdots t_n \in \Lambda_\sigma$ if
$s \in H_{\tau_1 \to \cdots \to \tau_n \to \sigma}$ and
$t_i \in \Lambda_{\tau_i}$ for $1 \leq i \leq n$.
\end{itemize}
We obtain $H_\sigma \subseteq \Lambda_\sigma$ by setting $n = 0$ in the
final case above. This presentation of (simply-typed) $\lambda$-terms is due
to Wadsworth; a formal justification is given in \cite[Lemma 8.3.7]{B84}.
If $s \in \bigcup_{\sigma \in \xT}\Lambda_\sigma$ we say that
$s$ is a \emph{term}. We abbreviate
$\lambda x_1. \dots \lambda x_n.s$ to $\lambda x_1,\dots,x_n.s$ and
denote the set of free variables of a term $s$ by $\fv{s}$.
A variable $x$ is \emph{fresh} for a term $s$ if $x \notin \fv{s}$.
The proper subterm relation is defined as follows:
$\lambda x.s \subt u$ if $s \subteq u$ and
$st_1\cdots t_n \subt u$ if $t_i \subteq u$ for
some $1 \leq i \leq n$. Here, ${\subteq} = {\subt} \cup {=}$.
If $s \subteq t$ we say that $t$ is a \emph{subterm} of $s$.
A \emph{context} ($C$) is a term containing exactly one occurrence of
the special symbol $\square$ which can assume any type. We write
$C[s]$ to denote the term where $\square$ has been replaced by the
term $s$ of the same type.

Finite mappings from variables to terms of the same type
are called \emph{substitutions} ($\gamma$). Given a substitution
$\gamma = \SET{x_1 \mapsto s_1,\,\dots,\,x_n \mapsto s_n}$
we define
$\dom{\gamma} = \SET{\seq{x}}$,
$\im{\gamma} = \SET{\seq{s}}$ and
$\fv{\gamma} = \bigcup_{i=1}^n \fv{s_i}$. We say that a variable $x$
is \emph{fresh} for a substitution $\gamma$ if
$x \notin \dom{\gamma} \cup \fv{\gamma}$. By $\gamma|_V$ we denote
the restriction of $\gamma$ such that
$\dom{\gamma|_V} = \dom{\gamma} \cap V$. Given a term $s$ and a
substitution $\gamma$, its application $s\gamma$ binds stronger than
the abstraction operator and is defined by induction
on $s$: $h\gamma = \gamma(h)$ if $h \in \dom{\gamma}$, $h\gamma = h$
if $h \notin \dom{\gamma}$,
$(\lambda x.s)\gamma = \lambda z.s\SET{x \mapsto z}\gamma$ where $z$
is fresh for $\gamma$ and $s$ (capture-avoidance), and
$(s t_1 \cdots t_n)\gamma = (s\gamma) (t_1\gamma) \cdots (t_n\gamma)$.

Every term $s$ has a unique $\beta$-normal form $s\bnf$. We will
often use the fact that $\beta$-reduction is confluent in proofs.
We write $s \in \nfb$ if $s$ is already in $\beta$-normal form. The unique
$\eta$-long form $s\eex$ of a term $s$ is
obtained by exhaustively applying restricted $\eta$-expansion \cite{A93}.
We write $s \in \nfel$ if $s$ is in $\eta$-long normal from.
If $s \in \nfb$ and $s \in \nfel$ we say that $s$ is in
its unique \emph{$\beta\eta$-long normal form} and write $s \in \nfbel$.
We follow the convention that $\bnf$ and $\eex$ 
bind stronger than the abstraction operator.
We write $s\lnf$ for $s\bnf\eex$.
It is convenient to work with terms in $\nfel$ because $\nfel$
is closed under substitution and $\beta$-reduction, see e.g.\
\cite[Proposition 4.2.9]{vR96}. As a consequence, we immediately
obtain the following result.

\begin{lemma}
\label{lem:nfbelsubst}
If $s \in \nfbel$ and $\im{\gamma} \subseteq \nfbel$ then
$s\gamma\bnf \in \nfbel$.
\end{lemma}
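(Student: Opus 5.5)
The plan is to derive the statement directly from the two closure properties of $\nfel$ quoted just before it: closure under substitution and closure under $\beta$-reduction \cite[Proposition 4.2.9]{vR96}. Membership in $\nfbel$ means membership in both $\nfb$ and $\nfel$. The $\beta$-normal form part of the conclusion is immediate, since $s\gamma\bnf \in \nfb$ holds by definition of $\bnf$. Hence the only thing to prove is $s\gamma\bnf \in \nfel$.

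First I use the hypotheses to get $s \in \nfel$ and $\im{\gamma} \subseteq \nfel$, since $\nfbel \subseteq \nfel$. Closure of $\nfel$ under substitution then yields $s\gamma \in \nfel$. Next, $s\gamma\bnf$ is reached from $s\gamma$ by a finite sequence of $\beta$-steps; this sequence exists by strong normalization of the simply-typed $\lambda$-calculus, which is implicit in the existence of the unique $\beta$-normal form. By induction on the length of this sequence, closure of $\nfel$ under $\beta$-reduction gives $s\gamma\bnf \in \nfel$. Combined with $s\gamma\bnf \in \nfb$, this gives $s\gamma\bnf \in \nfbel$.

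The only point that needs care is the exact form of the cited closure properties. Closure under substitution only makes sense when the substitution respects types, which is guaranteed here because substitutions map variables to terms of the same type. It also requires the image to be in $\eta$-long form, which is exactly our hypothesis on $\im{\gamma}$. If the cited result is phrased for single $\beta$-steps $s \to_\beta t$, the induction over the reduction sequence described above supplies the multi-step version.

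One subtlety is that a substituted variable in head position, as in $x\,t_1\cdots t_n$ with $\gamma(x) = \lambda y_1,\dots,y_n.u$, creates redexes. One might worry whether $\eta$-longness survives when a $\lambda$-abstraction appears applied to arguments. The cited closure properties handle exactly this situation, so no separate case analysis on the head of $s$ should be needed. If a self-contained argument were required instead, I would prove both closure properties by induction on the term structure using the Wadsworth presentation.
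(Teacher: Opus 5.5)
Your proposal is correct and takes the same route as the paper: the paper states the lemma as an immediate consequence of $\eta$-long forms being closed under substitution and $\beta$-reduction \cite[Proposition 4.2.9]{vR96}, with $\beta$-normality of the result holding by definition. Your additional remarks, that the single-step closure is iterated along the finite reduction to normal form and that $\gamma$ is type-respecting with an $\eta$-long image, only spell out what the paper leaves implicit.
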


Finally, we recall Mayr and Nipkow's definition of HRSs \cite{MN98} using
our notation. A \emph{rule} $\ell \R r$ satisfies the following conditions:
$\ell, r : \iota$ with $\iota \in \xS$,
$\ell = f s_1 \cdots s_n, r \in \nf{\beta}$ with $f \in \xF$ and
$\fv{r} \subseteq \fv{\ell}$. A \emph{higher-order rewrite system}
(HRS) is a set of rules. Given an HRS $\xR$, its rewrite relation
${\RbR} \subseteq \nfbel \times \nfbel$ is defined as follows:
$s \RbR t$ if there exist $\ell \R r \in \xR$, a context $C$
and a substitution $\gamma$ with $\im{\gamma} \subseteq \nfbel$ such that
$s = C[\ell\gamma\lnf]$ and $t = C[r\gamma\lnf]$.
In contrast to the definition in \cite{MN98}, we do not demand
that right-hand sides of rules are in $\nfbel$ in order to simplify
the relationship with STMRSs later. Note that this does not have
any effect on the rewrite relation as restricted $\eta$-expansion
is applied anyway.

\section{Extended Pattern Rewrite Systems}

In this section, we present a subclass of HRSs which
generalizes Nipkow's PRSs \cite{MN98}. The rewrite systems we
want to target are the DPRSs recently introduced by us \cite{NM25b}.
DPRSs use Yokoyama et al.'s DHPs \cite{YHT04}
as left-hand sides. While DHPs and therefore DPRSs
have desirable properties with respect to matching,
unification and ultimately confluence, the emphasis
on termination in this paper allows us to ignore
a large part of the involved definition of DHPs.
Thus, the following definition subsumes both PATs and DHPs but is not
meant to be practically useful.

\begin{definition}
\label{def:epat}
A term $s \in \nf{\beta}$ is an \emph{extended pattern} \textup{(EPAT)} if
every free occurrence of a variable $F$ is in a subterm
$F u_1 \cdots u_n$ such that each $u_i$ is $\eta$-equivalent to
a non-abstraction and does not contain free variables.
\end{definition}

\begin{example}
Consider the sort $\m{a}$, the function symbol $\m{f} : \m{a} \to \m{a}$
as well as the variables
$x : \m{a}$, $y : \m{a}$ and
$Z : \m{a} \to (\m{a} \to \m{a}) \to \m{a} \in \xV$.
The terms $\lambda x.Z\:x\:\m{f}$,
$\lambda x.Z\:(\m{f}\:x)\:\m{f}$ and
$\lambda x.Z\:(\m{f}\:x)\:(\lambda w.\m{f}\:w)$ are EPATs while
$\lambda x.Z\:y\:\m{f}$ and
$\lambda x.Z\:x\:(\lambda w.\m{f}\:(\m{f}\:w))$ are not.
\end{example}

\begin{definition}
A rule $\ell \R r$ is an \emph{extended pattern rule}
if $\ell$ is an \textup{EPAT}. An \emph{extended pattern rewrite system}
\textup{(EPRS)} is a set of extended pattern rewrite rules.
\end{definition}

\begin{example}
\label{exa:eprs}
Consider the sorts $\m{a}$, $\m{b}$ and $\m{u}$, the function symbols
$\m{inl} : \m{a} \to \m{u}$, $\m{inr} : \m{b} \to \m{u}$ and
$\m{case} : \m{u} \to (\m{a} \to \m{u}) \to (\m{b} \to \m{u}) \to \m{u}$
as well as the variables $x : \m{a}$, $y : \m{b}$, $z : \m{u}$,
$F : \m{a} \to \m{u}$,
$G : \m{b} \to \m{u}$ and $H : \m{u} \to \m{u}$. The EPRS $\xR$ consisting
of the rules
\begin{align*}
\m{case}\:(\m{inl}\:x)\:F\:G &\R F\:x \\
\m{case}\:(\m{inr}\:y)\:F\:G &\R G\:y \\
\m{case}\:z\:(\lambda w.H\:(\m{inl}\:w))\:(\lambda w.H\:(\m{inr}\:w))
&\R H\:z
\end{align*}
models a case distinction operation on the disjoint union datatype
$\m{u} = \m{a} \uplus \m{b}$. For simplicity, we only provide one definition
of $\m{case}$ which returns terms of type $\m{u}$. This system is given as
an interesting example of a non-PRS in Nipkow's
paper introducing PRSs \cite{N91}. In particular,
matching with $\xR$ is still deterministic as it
falls into the class of DPRSs \cite{NM25b}.
Termination of $\xR$ has been established
by van de Pol \cite[p.~74]{vdP96}.
Note that the only benchmark in TPDB corresponding to $\xR$,
\href{https://github.com/TermCOMP/TPDB-ARI/blob/master/Higher_Order_%
Rewriting/Mixed_HO_10/sdu.ari}%
{\texttt{Higher\_Order\_Rewriting/Mixed\_HO\_10/sdu.ari}},
is not an adequate model of
$\xR$ as it uses an undefined function symbol in order to
express the third rule in a PRS.
\end{example}

\section{Simply-Typed Meta-Substitution Rewrite Systems}

We now present an extension of the current semantics of the higher-order
category in TermCOMP \cite{termcompref}.
The grammar for higher-order benchmarks in TPDB
\begin{align*}
\text{hotrs} &::= \texttt{(format\;higher-order)}\;\text{sort}{+}\;
\text{fun}{+}\;\text{rule}{+} \\
\text{sort} &::= \texttt{(sort}\;\text{identifier}\texttt{)} \\
\text{fun} &::= \texttt{(fun}\;\text{identifier}\;\text{type}\texttt{)} \\
\text{type} &::= \text{identifier} \,\mid\,
\texttt{(->}\;\text{type}{+}\;\text{identifier}\texttt{)} \\
\text{term} &::= \text{identifier} \,\mid\,
\texttt{(}\text{identifier}\;\text{term}{+}\texttt{)} \,\mid\,
\texttt{(lambda}\;\texttt{(}\text{var}{+}\texttt{)}\;\text{term}\texttt{)}
\\
\text{var} &::= \texttt{(}\text{identifier}\;\text{type}\texttt{)} \\
\text{rule} &::= \texttt{(rule}\:\text{term}\;\text{term}\texttt{)}
\end{align*}
is an extension of the grammar for many-sorted term rewrite systems
in the ARI format.\footnote{\url{https://project-coco.uibk.ac.at/ARI}}
The grammar is compatible with our presentation of simply-typed
$\lambda$-terms and forces terms in rules to be in $\nfb$. The following
definition generalizes the restrictions on rules for the higher-order
category in TermCOMP by allowing EPATs instead of just PATs as left-hand
sides in rules.

\begin{definition}
An \emph{\textup{STMRS} rule} $\ell \RR r$ satisfies the following
conditions:
\begin{itemize}
\item
$\ell$ and $r$ have the same type,
\item
$\ell, r \in \nf{\beta}$,
\item
$\fv{r} \subseteq \fv{\ell}$,
\item
$\ell = f s_1 \cdots s_n$ with $f \in \xF$,
\item
$\ell$ is an \textup{EPAT}, and
\item
variables are used \emph{consistently}, i.e., there is a function
$\ar : \fv{\ell} \to \mathbb{N}$ such that
\begin{itemize}
\item
if $\ell \subteq x s_1 \cdots s_n$ and $x \in \fv{\ell}$ then $\ar(x) = n$,
and
\item
if $r \subteq x s_1 \cdots s_n$ and $x \in \fv{\ell}$ then $\ar(x) \leq n$.
\end{itemize}
\end{itemize}
\end{definition}

Since the arity of symbols is not fixed in the syntax employed in TPDB,
we have to ensure consistent usage of variables to facilitate the
definition of meta-substitutions.

\begin{definition}
Given a rule $\ell \RR r$ and its induced arity function $\ar$, a
\emph{meta-substitution} $\theta$ maps each variable
$x \in \fv{\ell}$ to $\langle \seq[\ar(x)]{y} \rangle u$ where
$u : \tau$ whenever $x : \sigma_1 \to \cdots \to \sigma_{\ar(x)} \to \tau$.
Intuitively, $\langle \seq[\ar(x)]{y} \rangle$ denotes
meta-$\lambda$-abstraction which is eliminated in meta-substitution
application. Hence, we identify $\langle \rangle u$ with $u$. Furthermore,
we define $\dom{\theta} = \fv{\ell}$,
$\im{\theta} = \SET{\lambda \seq[\ar(x)]{y}.u \mid
\text{$x \in \dom{\theta}$ and
$\theta(x) = \langle \seq[\ar(x)]{y} \rangle u$}}$
and $\fv{\theta} = \bigcup \SET{\fv{s} \mid s \in \im{\theta}}$.
As for regular substitutions, a variable $x$ is \emph{fresh} for a
meta-substitution $\theta$ if $x \notin \dom{\theta} \cup \fv{\theta}$.
Given $s \in \SET{\ell, r}$ and a meta-substitution $\theta$, its
application $s\theta$ binds stronger than the abstraction operator and is
defined inductively on terms in $\nfb$ as follows:
\begin{itemize}
\item
$(h s_1 \cdots s_n)\theta = (u\gamma) \cdot (s_{\ar(h)+1}\theta) \cdots
(s_n\theta)$ if $\theta(h) = \langle \seq[\ar(h)]{y} \rangle u$
\item[] where
$\gamma = \SET{y_1 \mapsto s_1\theta,\,\dots,\,y_{\ar(h)} \mapsto
s_{\ar(h)}\theta}$ is an ordinary substitution,
\item
$(h s_1 \cdots s_n)\theta = h (s_1\theta) \cdots (s_n\theta)$ if
$h \notin \dom{\theta}$, and
\item
$(\lambda x.s)\theta = \lambda z.s\SET{x \mapsto z}\theta$ where $z$ is
fresh for $\theta$ and $s$.
\end{itemize}
\end{definition}

Since meta-substitutions depend on arities of free variables,
they are always scoped within a specific rule which provides
the corresponding arity function. For globally fixed arities,
the notion of \emph{substitutes} as used in combinatory reduction
systems corresponds to meta-substitutions \cite{K93,KvOvR93}.

\begin{example}
\label{exa:subst}
Let $\m{a} \in \xS$,
$x : \m{a} \to \m{a}$, $y : \m{a}$,
$Z : (\m{a} \to \m{a}) \to \m{a} \in \xV$,
$\m{0} : \m{a}$, $\m{1} : \m{a}$, $\m{g} : \m{a} \to \m{a} \to \m{a}$,
$\m{f} : (\m{a} \to \m{a}) \to \m{a} \to \m{a} \in \xF$
and consider the rule
$\m{f}\:(\lambda y.Z\:(\m{g}\:y))\:\m{0} \RR Z\:(\lambda y.y)$
from which we infer $\ar(Z) = 1$. For the meta-substitution
$\theta = \SET{Z \mapsto \langle x \rangle (x\:\m{1})}$ we have
$(\m{f}\:(\lambda y.Z\:(\m{g}\:y))\:\m{0})\theta =
\m{f}\:(\lambda y.\m{g}\:y\:\m{1})\:\m{0}$ and
$(Z\:(\lambda y.y))\theta = (\lambda y.y)\:\m{1}$.
\end{example}

For an STMRS rule $\ell \RR r$ and a meta-substitution $\theta$ with
$\im{\theta} \in \nfb$ we have $\ell\theta \in \nfb$ by construction;
a related statement is proved later in \lemref{lhs}.
As we have seen in the previous example, this need not
be the case for $r\theta$. In particular, meta-substitution
application corresponds to ordinary substitutions with
a subsequent \emph{development} step \cite{B84,vOvR93}
as opposed to rewriting to $\beta$-normal form.
In EPATs, substituting $\beta$-normal terms for free variables
with non-abstractions as arguments cannot create $\beta$-redexes,
so a development step is enough for $\beta$-normalization.

\begin{definition}
A \emph{simply-typed meta-substitution rewrite system}
\textup{(STMRS)} is a set of \textup{STMRS} rules. Given a \textup{STMRS}
$\xR$, its rewrite relation
$\RRb{\xR \cUp \beta}$ is defined as follows for arbitrary contexts $C$:
\begin{itemize}
\item
$C[\ell\theta] \RRb{\xR} C[r\theta]$ if $\ell \RR r \in \xR$ and
$\theta$ is a corresponding meta-substitution, and
\item
$C[(\lambda x.s) t_1 \cdots t_n] \RRb{\beta}
C[s\SET{x \mapsto t_1} t_2 \cdots t_n]$ for $n \geq 1$.
\end{itemize}
We use $\RRab{\beta}{*}$ for repeated applications of $\beta$-reduction and
$\RRab{\beta}{!}$ for rewriting to $\beta$-normal form.
\end{definition}

AFSs can be expressed by STMRSs by demanding that free variables
always have arity $0$. In this case, meta-substitution corresponds
to ordinary substitution. Although free variables may have
a strictly positive arity in AFSs, this is usually not what is
intended: The left-hand side
$\m{d}\:(\lambda z.\m{sin}\:(F\:z))$ from the well-known
HRS\footnote{see e.g.~\href{https://github.com/TermCOMP/TPDB-ARI/blob/%
master/Higher_Order_Rewriting/Hamana_Kikuchi_18/h16.ari}%
{\texttt{Higher\_Order\_Rewriting/Hamana\_Kikuchi\_18/h16.ari}} in TPDB}
modeling parts of differential calculus
should match $\m{d}\:(\lambda z.\m{sin}\:z)$, but this is not
possible with plain matching as used in AFSs.

For EPRSs, the relation to STMRSs is more intricate. Clearly,
the EPRS presented in \exaref{eprs} is also an STMRS. In general,
any STMRS which only consists of rules typed by a sort
is also an EPRS. For EPATS with non-abstractions
as arguments of free variables, we have already
seen that matching with $\beta$-normal meta-substitutions
(as done in STMRSs) is as powerful as full higher-order matching (as
done in EPRSs) because the implicit $\beta$-reduction in
meta-substitution application is sufficient for $\beta$-normalization.
However, applying meta-substitutions to right-hand sides does not
yield $\beta$-normal terms in general. Hence, applying a rewrite step in an
STMRS and its corresponding EPRS might yield different terms. On top
of that, rules are used in their $\eta$-expanded form in EPRS rewrite
steps while no such transformation is done for STMRS rewrite steps.
In the next section, we will establish a syntactic subclass of STMRSs (and
therefore an syntactically definable subset of higher-order benchmarks
in TPDB) which can implement their corresponding EPRSs by normalized
rewriting \cite{M96} with respect to $\beta$-reduction.

\section{Implementing EPRSs in STMRSs}

The following two definitions describe the class of STMRSs which are
able to implement their corresponding EPRSs. As such, these definitions
extend the restriction proposed for a PRS subcategory of the higher-order
category in TermCOMP \cite{termcompref}.

\begin{definition}
Given a set of variables $W$ and an arity function $\ar$,
\emph{(quasi-)well-behaved} terms with respect to $W$ and $\ar$ are
defined inductively as follows:
\begin{itemize}
\item
$\lambda x.s$ is (quasi-)well-behaved if so is $s$,
\item
$h s_1 \cdots s_n$ is (quasi-)well-behaved if its type is a sort,
$h \in \xF \cup (\xV \setminus W)$ and all $s_i$ are (quasi-)well-behaved,
\item
$x s_1 \cdots s_n$ with $x \in W$ is quasi-well-behaved if all $s_i$ are
quasi-well-behaved,
\item
$x s_1 \cdots s_n$ with $x \in W$ is (quasi-)well-behaved if 
\begin{itemize}
\item
for all $1 \leq i \leq \ar(x)$, $s_i = h_i t_1 \cdots t_{m_i}$
with all $t_j$ (quasi-)well-behaved, and
\item
for all $\ar(x) < i \leq n$, $s_i$ is (quasi-)well-behaved.
\end{itemize}
\end{itemize}
\end{definition}

\begin{example}
Consider the sort $\m{a}$, the function symbols
$\m{f} : (\m{a} \to \m{a}) \to \m{a} \to \m{a}$,
$\m{g} : \m{a} \to \m{a}$ as well as the variables $x : \m{a}$,
$F : \m{a} \to \m{a}$ and $G : (\m{a} \to \m{a}) \to \m{a} \to \m{a}$.
Furthermore, let $W = \SET{F,G}$ with $\ar(F) = 0$ and $\ar(G) = 1$.
The terms
\begin{gather*}
\m{f}\:F\:x \qquad \m{f}\:(\lambda z.F\:z)\:x \qquad
\m{f}\:(G\:\m{g})\:x \qquad
\m{f}\:(\lambda z.G\:\m{g}\:z)\:x
\end{gather*}
are (quasi-)well-behaved with respect to $W$ an $\ar$,
\begin{gather*}
\m{f}\:(G\:(\lambda w.\m{g}\:w))\:x \qquad
\m{f}\:(\lambda z.G\:(\lambda w.\m{g}\:w)\:z)\:x
\end{gather*}
are quasi-well-behaved but not well-behaved
with respect to $W$ and $\ar$ and the term
$\m{f}\: F$ is
neither well-behaved nor quasi-well-behaved with respect to $W$ and $\ar$.
\end{example}

Intuitively, (quasi-)well-behaved terms with respect to $W$ and $\ar$
are in $\beta\eta$-long normal form up to variables $x \in W$ which do
not have to be fully applied. This allows for succinct presentations
of (quasi-)well-behaved terms. In addition, quasi-well-behaved terms
allow the first $\ar(x)$ arguments of variables $x \in W$ to be
fully $\eta$-reduced at the root. For well-behaved terms,
this is enforced, so the first $\ar(x)$ arguments
of variables $x \in W$ may not be abstractions.
While it is sufficient for right-hand sides of rules to be
quasi-well-behaved, we will demand left-hand sides of rules to be
well-behaved; in combination with the restriction to EPATs,
this ensures that the application of meta-substitutions $\theta$ with
$\dom{\theta} = W$ does not create $\beta$-redexes.

\begin{definition}
A rule $\ell \RR r$ with arity function $\ar$ is well-behaved if
it is typed by a sort,
\begin{itemize}
\item
$\ell$ is well-behaved with respect to $\fv{\ell}$ and $\ar$, and
\item
$r$ is quasi-well-behaved with respect to $\fv{\ell}$ and $\ar$.
\end{itemize}
An \textup{STMRS} $\xR$ is well-behaved if all its rules are well-behaved.
\end{definition}

Since all rules in well-behaved STMRS can be typed by sorts, a well-behaved
STMRS can also be viewed as an EPRS. The additional constraints on
(quasi-)well-behaved terms ensure a correspondence between the rewrite
relations of well-behaved STMRS and their corresponding EPRSs. The upcoming
definition connects meta-substitutions and ordinary substitutions.

\begin{definition}
Given a meta-substitution $\theta$, let $\theta^\lambda$ be the
substitution on the same domain where
$\theta^\lambda(x) = \lambda \seq{x}.u$ whenever
$\theta(x) = \langle \seq{x} \rangle u$.
Furthermore, given a substitution $\gamma$ with
$\im{\gamma} \subseteq \nfbel$ and a rule $\ell \RR r$ with induced arity
function $\ar$ we define the meta-substitution $\gamma^{-\lambda}$
as follows:
\[
\gamma^{-\lambda}(x) = \begin{cases}
\langle \seq[\ar(x)]{x} \rangle \lambda x_{\ar(x)+1}, \dots, x_n.u
&\text{if $\gamma(x) = \lambda \seq{x}.u$} \\
\langle \seq[\ar(x)]{x} \rangle \lambda x_{\ar(x)+1}, \dots, x_n.x \cdot
x_1\eex \cdots x_n\eex
&\text{if $x \in \fv{\ell} \setminus \dom{\theta}$}
\end{cases}
\]
Here, we assume $x\eex = \lambda \seq{x}.x \cdot x_1\eex \cdots x_n\eex$.
\end{definition}

\begin{example}
\label{exa:stmrsVSeprs}
Recall the rule $\ell \RR r$ and the meta-substitution $\theta$
from \exaref{subst} and let $\xR$ be the well-behaved STMRS
$\SET{\ell \RR r}$. As we have already seen, $\xR$ can also be viewed as
an EPRS. By definition, $\theta^\lambda = \SET{z \mapsto \lambda x.x\:1}$.
The following comparison of rewriting with $\RRb{\xR}$ and $\Rb{\xR}$
suggests that EPRSs can be implemented in STMRSs.
\begin{align*}
\ell\theta &= \m{f}\:(\lambda y.\m{g}\:y\:\m{1})\:\m{0}
\RRb{\xR} (\lambda y.y)\:\m{1} = r\theta \RRab{\beta}{!} \m{1} \\
\ell\theta^\lambda\lnf &= \m{f}\:(\lambda y.\m{g}\:y\:\m{1})\:\m{0}
\Rb{\xR} \m{1} = r\theta^\lambda\lnf
\end{align*}
\end{example}

The upcoming theorem, our main result, states the implementation result
evidenced by the previous example.

\begin{restatable}{theorem}{main}
\label{thm:main}
Let $\xR$ be a well-behaved \textup{STMRS} and $s \in \nfbel$.
We have $s \RRb{\xR} \cdot \RRab{\beta}{!} t$ if and only if $s \Rb{\xR} t$
when viewing $\xR$ as an \textup{EPRS}.
\end{restatable}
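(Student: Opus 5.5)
The plan is to reduce both directions to a correspondence at the root and then lift it through contexts. The central auxiliary facts concern the interplay between a meta-substitution $\theta$ and the ordinary substitution $\theta^\lambda$, and between an ordinary substitution $\gamma$ with $\im{\gamma} \subseteq \nfbel$ and its meta-counterpart $\gamma^{-\lambda}$. I would establish two lemmas.

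\textbf{(A) Left-hand side lemma.} If $\ell$ is well-behaved with respect to $\fv{\ell}$ and $\ar$, is an EPAT, and $\im{\theta^\lambda} \subseteq \nfbel$, then $\ell\theta = \ell\theta^\lambda\lnf$. Moreover, $\ell\theta$ contains no $\beta$-redex.

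\textbf{(B) Right-hand side lemma.} If $r$ is quasi-well-behaved, then $r\theta \RRab{\beta}{*} r\theta^\lambda\bnf$, and this $\beta$-normal form is already $\eta$-long, i.e. equals $r\theta^\lambda\lnf$.

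Both lemmas are proved by induction on the well-behaved (respectively quasi-well-behaved) structure of the term. In the case $x s_1 \cdots s_n$ with $x \in W$, the first $\ar(x)$ arguments are non-abstractions in (A). Because $\ell$ is an EPAT, they have no free variables, so they are closed under $\theta$ and stay non-abstractions. Hence substituting them into the body $u$ of $\theta(x)$ creates no new redexes, and the development coincides with $\beta$-normalisation. The remaining arguments are then applied to the abstraction $\lambda x_{\ar(x)+1},\dots,x_n.u$ stored in $\theta(x)$.

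In the $\eta$-long setting, however, the remaining arguments of a well-behaved left-hand side must be handled carefully. In a sort-typed well-behaved $\ell$ every variable occurrence is fully applied, so the trailing arguments do form redexes with the trailing abstractions of $u$. I would therefore restate (A) as follows: $\ell\theta$ equals $\ell\theta^\lambda\lnf$ provided $\theta$ arises as $\gamma^{-\lambda}$ for some $\gamma$ with $\im{\gamma} \subseteq \nfbel$. For the converse direction I would also observe that every $\beta$-normal meta-substitution matching $\ell$ can be put into this form. Lemma~\ref{lem:nfbelsubst} supplies the membership in $\nfbel$ for the results. For (B), quasi-well-behavedness only guarantees that $\theta^\lambda$-instances are $\eta$-long after $\beta$-normalisation; this again uses Lemma~\ref{lem:nfbelsubst} together with closure of $\nfel$ under $\beta$. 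Confluence of $\beta$ then yields $r\theta \RRab{\beta}{!} r\theta^\lambda\lnf$.

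\textbf{Direction $(\Rightarrow)$.} Suppose $s = C[\ell\theta] \RRb{\xR} C[r\theta] \RRab{\beta}{!} t$. Since $s \in \nfbel$, the subterm $\ell\theta$ is $\beta$-normal and $\eta$-long, so $\theta$ may be assumed $\beta$-normal. By (A) we get $\ell\theta = \ell\theta^\lambda\lnf$, after possibly normalising $\im{\theta^\lambda}$ to $\nfbel$ without changing the instance. By (B) we get $r\theta \RRab{\beta}{!} r\theta^\lambda\lnf$. Because $C[\cdot]$ has an $\eta$-long, $\beta$-normal hole context and $r\theta^\lambda\lnf$ is of sort type, no redex spans the hole. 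Hence $t = C[r\theta^\lambda\lnf]$, which is exactly an EPRS step.

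\textbf{Direction $(\Leftarrow)$.} Suppose $s = C[\ell\gamma\lnf]$ with $\im{\gamma} \subseteq \nfbel$. Take $\theta = \gamma^{-\lambda}$, using the second clause of its definition for variables outside $\dom{\gamma}$. Then check that $\theta^\lambda$ is $\eta\beta$-equal to $\gamma$ on $\fv{\ell}$, so that the instances coincide after normalisation. Applying (A) and (B) as before finishes this direction.

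The main obstacle is lemma (A), specifically showing that the single development step performed by meta-substitution yields precisely the $\beta\eta$-long normal form. This requires proving that the trailing, non-meta arguments of a variable in a well-behaved $\ell$ are $\eta$-long and are consumed by exactly the abstractions introduced by $\gamma^{-\lambda}$, without creating nested redexes. I would first attempt a generalised induction over quasi-well-behaved terms with respect to $W = \dom{\theta}$, with the invariant that every argument substituted for a bound variable $x_i$ ($i > \ar(x)$) is itself in $\nfbel$.
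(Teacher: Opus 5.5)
Your overall architecture matches the paper's. (A) combines Lemmas~\ref{lem:lhs} and~\ref{lem:both} for $\ell$, (B) is Lemma~\ref{lem:both} for $r$, and the ($\Leftarrow$) direction via $\gamma^{-\lambda}$ is the paper's. Your observation that $(\gamma^{-\lambda})^\lambda$ is $\eta$-equal to $\gamma$ on $\fv{\ell}$, so both instances share a $\beta\eta$-long normal form, neatly replaces the induction of Lemma~\ref{lem:minuslam} and also covers $r$.

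\textbf{Gap in ($\Rightarrow$).} There $\theta$ is fixed by the given step, and $t$ is the $\beta$-normal form of $C[r\theta]$. So you may neither ``assume'' $\theta$ is normal nor ``normalise $\im{\theta^\lambda}$ without changing the instance''. Even when $\ell\theta$ is unchanged, replacing $\theta$ can change $r\theta\bnf$. For example, take $r = \m{h}\:F$ with $\ar(F) = 0$ and $\theta(F) = \m{g}$ versus $\lambda z.\m{g}\:z$. Also, (B) only yields $r\theta \RRab{\beta}{!} r\theta^\lambda\lnf$ under $\im{\theta} \subseteq \nfbel$. Normalising is harmless only because it is the identity, and that is what must be proved. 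The missing ingredient is the converse of (A), namely Lemma~\ref{lem:metasubstbelnf}: $\ell\theta \in \nfbel$ already forces $\im{\theta} \subseteq \nfbel$.

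The reason is as follows. Every $x \in \fv{\ell}$ occurs in $\ell$ as $x\,s_1 \cdots s_{\ar(x)}$, where the $s_i$ are non-abstractions untouched by $\theta$. Hence $(x\,s_1 \cdots s_{\ar(x)})\theta = u\SET{y_1 \mapsto s_1,\dots,y_{\ar(x)} \mapsto s_{\ar(x)}}$ neither creates nor removes $\beta$-redexes. It also keeps every not-fully-applied occurrence in $u$ not fully applied. So any $\beta$-redex or $\eta$-defect of $u$ would be visible in $\ell\theta \in \nfbel$. This same argument is what justifies your ``may be assumed $\beta$-normal''.

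\textbf{Misplaced obstacle.} Your reason for restating (A) is mistaken. Well-behavedness does not force variables of $W$ to be fully applied; for instance $\m{f}\:F\:x$ with $\ar(F) = 0$ is well-behaved. Moreover, consistency of $\ar$ forces every occurrence of $x \in \fv{\ell}$ in $\ell$ to have \emph{exactly} $\ar(x)$ arguments. So $\ell$ has no trailing arguments, $\ell\theta$ has no $\beta$-redex, and your first version of (A) with hypothesis $\im{\theta} \subseteq \nfbel$ was already right. The restriction to $\theta = \gamma^{-\lambda}$ is harmless but unnecessary, since $\theta = (\theta^\lambda)^{-\lambda}$ whenever $\im{\theta} \subseteq \nfbel$.

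Trailing arguments occur only in $r$, where $\ar(x) \leq n$, and that is where the real work lies, in (B). In $r$ the first $\ar(x)$ arguments may be $\eta$-reduced terms $h_i\,t_1 \cdots t_m$ of functional type, or abstractions. So neither $r$ nor $r\theta^\lambda$ is in $\nfel$, and Lemma~\ref{lem:nfbelsubst} does not apply directly. You need an inner induction on the body $u$ of $\theta(x)$, as in Lemma~\ref{lem:both}. It uses that each $y_i$ occurs fully applied in $u \in \nfbel$, so replacing it by $h_i\,(t_1\theta\bnf) \cdots (t_m\theta\bnf)$ gives an $\eta$-long result. Closure of $\nfel$ under $\beta$ then handles the trailing arguments.
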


\section{Proof of \thmref{main}}

We start with three lemmata about left-hand sides of well-behaved
STMRSs with respect to terms in $\nfbel$. We will use
these lemmata to show that well-behaved STMRSs and EPRSs
can match the same terms in $\nfbel$.

\begin{lemma}
\label{lem:lhs}
Let $\ell \RR r$ be a well-behaved \textup{STMRS} rule and $\theta$ a
corresponding meta-substitution. If $\im{\theta} \subseteq \nfbel$ then
$\ell\theta \in \nfbel$.
\end{lemma}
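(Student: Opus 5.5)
We prove a slightly stronger statement by structural induction on subterms $s$ of $\ell$: if $s$ is well-behaved with respect to $W = \fv{\ell}$ and $\ar$, and $s$ is a subterm of the EPAT $\ell$, then $s\theta \in \nfbel$. We additionally track the variables bound above $s$ in $\ell$: they are not in $W$. By renaming, they are fresh for $\theta$, and they are untouched by $\theta$. Since $\ell$ is a subterm of itself and is well-behaved by assumption, the lemma follows.

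The induction has three cases.

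\emph{Abstraction.} For $s = \lambda x.s'$ we have $s\theta = \lambda z.s'\SET{x \mapsto z}\theta$. This is a renaming of $\lambda x.(s'\theta)$, and $s'$ is again a well-behaved subterm. The induction hypothesis gives $s'\theta \in \nfbel$, and $\nfbel$ is closed under abstraction, so $s\theta \in \nfbel$.

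\emph{Head not in $W$.} For $s = h s_1 \cdots s_n$ with $h \in \xF \cup (\xV \setminus W)$, the type of $s$ is a sort. We have $s\theta = h (s_1\theta)\cdots(s_n\theta)$, and each $s_i\theta \in \nfbel$ by the induction hypothesis. The head is not an abstraction and the term is fully applied at sort type, so $s\theta \in \nfbel$.

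\emph{Head $x \in W$.} This is the key case. Let $s = x s_1 \cdots s_n$ and $\theta(x) = \langle y_1,\dots,y_{k}\rangle u$ with $k = \ar(x)$. Consistency gives $n = k$. Then $s\theta = u\gamma$ with $\gamma = \SET{y_i \mapsto s_i\theta}$.

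First, $u \in \nfbel$. This holds because $\lambda \seq[k]{y}.u \in \im{\theta} \subseteq \nfbel$, and the body of a $\beta\eta$-long normal abstraction is itself in $\nfbel$. Moreover, $x$ is fully applied at sort type in an EPAT subterm of sort type, which comes from well-behavedness of the enclosing application. Hence $u$ has sort type.

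Next, each argument $s_i\theta$ lies in $\nfbel$ and is not an abstraction. By well-behavedness $s_i = h_i t_1 \cdots t_{m_i}$ with all $t_j$ well-behaved. By the EPAT condition $s_i$ contains no free variables of $\ell$; its free variables are bound variables of $\ell$, which are not in $W$. So $h_i \notin W$, and $s_i\theta = h_i (t_1\theta)\cdots(t_{m_i}\theta)$.

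Here lies the subtle point: $s_i$ need not have sort type, since $h_i$ may be only partially applied, for example $\m{g}$ in $Z\,(\m{g}\,y)$. So the $t_j\theta$ are in $\nfbel$ by the induction hypothesis, but $s_i\theta$ is only in $\nfb$ and contains no $\lambda$ at its root.

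Then $u\gamma$ is $\beta$-normal. Substituting non-abstractions for variables in a $\beta$-normal term cannot create a redex, because any occurrence $y_i\, v_1\cdots v_p$ becomes $h_i (t\theta)\cdots v_1\theta\cdots$, which has a non-$\lambda$ head. For $\eta$-longness, the occurrences of $y_i$ in $u$ are fully applied because $u \in \nfel$. Replacing a fully applied $y_i$ by the non-abstraction $h_i\vec{t}$ yields $h_i\,\vec{t}\,\vec{v}$, which is again a fully applied head with arguments in $\nfbel$. An induction on $u$ then shows $u\gamma \in \nfbel$.

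The main obstacle is precisely this last step. Lemma~\ref{lem:nfbelsubst} cannot be applied directly, because the $s_i\theta$ are not themselves in $\nfbel$: they are $\eta$-reduced at the root. Instead, I would prove by a small auxiliary induction on $u \in \nfbel$ that substituting $\beta$-normal non-abstraction terms $h\vec{t}$, with $\vec{t}$ in $\nfbel$, for variables yields a term in $\nfbel$. Alternatively, note that $s_i\theta\eex = \lambda\vec{z}.h_i\,\vec{t\theta}\,\vec{z}$ and that $u\gamma$ coincides with $u\{y_i\mapsto s_i\theta\eex\}\bnf$. Then Lemma~\ref{lem:nfbelsubst} applies, and the $\beta$-normalisation only performs the trivial contractions $(\lambda\vec z.h_i\vec t\vec z)\vec v \to h_i\vec t\vec v$.
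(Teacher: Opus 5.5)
Your proof is correct and follows the paper's proof essentially step by step: induction along the definition of well-behaved terms, with the key case $x\,s_1\cdots s_{\ar(x)}$ reduced to an inner induction on $u\in\nfbel$. That inner induction shows that substituting the non-abstractions $h_i\,t_1\cdots t_{m_i}$ for the $y_i$ stays in $\nfbel$, where the $t_j$ are in $\nfbel$ by the outer hypothesis.

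One side remark of yours is false but harmless: $u$ need not have sort type. For instance, taking $\ell = \m{f}\:(G\:\m{g})\:x$ with $\ar(G)=1$, as in the paper's example of well-behaved terms, the body $u$ of $\theta(G)$ has type $\m{a}\to\m{a}$ and is an abstraction. You never use this claim, though, since your inner induction on $u$ already covers abstractions.
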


\begin{proof}
Let $\ar$ be the arity function induced by $\ell \RR r$. We prove
the result by induction on subterms $s$ of $\ell$ with respect to the
definition of well-behaved terms.
\begin{itemize}
\item
Let $s = \lambda x.v$. Without loss of generality, $x$ is fresh for
$\theta$, so $s\theta = \lambda x.v\theta$ and the result follows from the
induction hypothesis together with the fact that $\nfbel$ is closed under
the abstraction operator.
\item
Let $s = h s_1 \cdots s_n$ with $h \in \xF \cup (\xV \setminus \fv{\ell})$
and its type be a sort. The
induction hypothesis yields that $s_i\theta \in \nfbel$ for all
$1 \leq i \leq n$. By definition, $h \notin \dom{\theta}$, so
$s\theta = h (s_1\theta) \cdots (s_n\theta)$ and the result
follows immediately from the induction hypothesis.
\item
Let $s = x s_1 \cdots s_n$ where $x \in \fv{\ell}$.
Since $s$ is a subterm of the left-hand side of an STMRS rule,
$n = \ar(x)$. By assumption, $x \in \dom{\theta}$, so
$\theta(x) = \langle \seq{y} \rangle u$
with $u \in \nfbel$. Since $\ell$ is an EPAT, we
know that $\fv{s_i} \cap \fv{\ell} = \varnothing$ and therefore
$s_i = s_i\theta$ for all $1 \leq i \leq n$. Let
$\gamma = \SET{y_1 \mapsto s_1,\,\dots,\,y_n \mapsto s_n}$.
It is sufficient to prove that $u\gamma \in \nfbel$.
We proceed by an inner induction
on $u$. Let $u = \lambda x.v$. Without
loss of generality, $x$ is fresh for $\gamma$. The inner
induction hypothesis yields that $v\gamma \in \nfbel$, so
$u\gamma = \lambda x.v\gamma \in \nfbel$ as $\nfbel$ is closed
under the abstraction operator. Now let
$u = h v_1 \cdots v_k$. The inner induction hypothesis
yields that $v_j\gamma \in \nfbel$ for all $1 \leq j \leq k$. If
$h \notin \SET{\seq{y}}$ then the result follows
immediately. Otherwise, $h = y_i$ for some $i$. By assumption,
$s_i = h_i t_1 \cdots t_m$, and the outer induction
hypothesis ensures that $t_j = t_j\theta \in \nfbel$ for all
$1 \leq j \leq m$. Hence,
$u\gamma = h_i t_1 \cdots t_m (v_1\gamma) \cdots (v_k\gamma) \in
\nfbel$ as desired. \qedhere
\end{itemize}
\end{proof}

\begin{lemma}
\label{lem:metasubstbelnf}
Let $\ell \RR r$ be a well-behaved \textup{STMRS} rule and $\theta$ be
a corresponding meta-substitution. If $\ell\theta \in \nfbel$ then
$\im{\theta} \subseteq \nfbel$.
\end{lemma}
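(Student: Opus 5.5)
We prove the converse of \lemref{lhs} by induction on subterms $s$ of $\ell$, following the definition of well-behaved terms, exactly as in the proof of \lemref{lhs}. The claim we establish is: if $s\theta \in \nfbel$, then $\theta(x) \in \nfbel$ (viewed as $\lambda \seq[\ar(x)]{y}.u$) for every $x \in \fv{s} \cap \fv{\ell}$. Since every $x \in \dom{\theta} = \fv{\ell}$ occurs in $\ell$, the lemma follows by taking $s = \ell$.

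We use one auxiliary fact: subterms of terms in $\nfbel$ that are not in applied head position are again in $\nfbel$. More precisely, if $\lambda x.v \in \nfbel$ then $v \in \nfbel$, and if $h v_1 \cdots v_k \in \nfbel$ then each $v_j \in \nfbel$. Both follow directly from the definition of $\beta\eta$-long normal forms.

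The abstraction case and the case $s = h s_1 \cdots s_n$ with $h \notin \fv{\ell}$ are immediate. Choose the bound variable fresh for $\theta$. Then $s\theta$ is $\lambda x.v\theta$, respectively $h(s_1\theta)\cdots(s_n\theta)$. The auxiliary fact gives $v\theta \in \nfbel$, respectively $s_i\theta \in \nfbel$, and the induction hypothesis applies to each component.

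The main case is $s = x s_1 \cdots s_{\ar(x)}$ with $x \in \fv{\ell}$ and $\theta(x) = \langle \seq{y} \rangle u$. As in \lemref{lhs}, the EPAT condition gives $s_i\theta = s_i$, so $s\theta = u\gamma$ with $\gamma = \SET{y_i \mapsto s_i}$. There are no other free variables of $\ell$ inside $s$, so it remains to show $\lambda \seq{y}.u \in \nfbel$, which is equivalent to $u \in \nfbel$.

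\begin{itemize}
\item We argue by an inner induction on $u$, showing that $u\gamma \in \nfbel$ implies $u \in \nfbel$.
\item The abstraction case is again immediate.
\item For $u = h v_1 \cdots v_k$ with $h \notin \SET{\seq{y}}$, we have $u\gamma = h(v_1\gamma)\cdots(v_k\gamma)$. The inner hypothesis gives $v_j \in \nfbel$. Moreover, $u$ is $\beta$-normal and $\eta$-long at its root, because $u\gamma$ has the same head and the same type as $u$.
\item For $h = y_i$, well-behavedness gives $s_i = h_i t_1 \cdots t_m$, a non-abstraction. Hence $u\gamma = h_i t_1 \cdots t_m (v_1\gamma) \cdots (v_k\gamma)$ contains no new redex at the root, and the $v_j\gamma$ are arguments of a $\nfbel$ term. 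The inner hypothesis yields $v_j \in \nfbel$.
\item The $\eta$-long condition at the root of $u$ follows because $u\gamma$, which has the same type as $u$, is fully applied to a sort. Here we use that $\ell$ is well-behaved, so the subterm $s$ has sort type.
\end{itemize}

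The step we expect to be the main obstacle is this $\eta$-long bookkeeping. We must make sure that substituting the non-abstractions $s_i$ for the $y_i$ can neither hide a missing $\eta$-expansion in $u$ nor make $u$ look $\eta$-long when it is not. The key point is that substitution preserves both the head structure and the types of all subterms of $u$ outside the $y_i$-positions. The non-abstraction requirement on the first $\ar(x)$ arguments in the definition of well-behaved terms is exactly what rules out a $\beta$-redex in $u\gamma$ arising from a position $y_i\,v_1\cdots v_k$ of $u$.
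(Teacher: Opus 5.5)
Your proposal is correct and is essentially the paper's proof: the same induction over the well-behaved structure of $\ell$, with the EPAT condition giving $s\theta = u\SET{y_1 \mapsto s_1,\dots,y_n \mapsto s_n}$ and the non-abstraction arguments $s_i$ yielding $u \in \nfbel$. Your inner induction on $u$ just spells out the paper's one-line final step.

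One small slip: the subterm $x\,s_1\cdots s_{\ar(x)}$ of $\ell$ need not have sort type, since well-behavedness puts no type condition on applied variables of $W$ (e.g.\ $G\:\m{g}$ in the well-behaved term $\m{f}\:(G\:\m{g})\:x$ with $\ar(G)=1$). You do not need that claim anyway, because any non-abstraction $u\gamma \in \nfbel$ already has sort type by $\eta$-longness.
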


\begin{proof}
Let $\ar$ be the arity function induced by $\ell \RR r$.
For $\ell \subteq s$ we prove that $s\theta \in \nfbel$ implies
$\im{\theta|_{\fv{s}}} \subseteq \nfbel$ by
induction on the definition of well-behaved terms.
The result then follows by taking $s = \ell$.
\begin{itemize}
\item
Let $s = \lambda x.t$. Without loss of generality, $x$ is fresh for
$\theta$, so $s\theta = \lambda x.t\theta$. Since
$\dom{\theta|_{\fv{t}}} = \dom{\theta|_{\fv{s}}}$, the result immediately
follows from the induction hypothesis.
\item
Let $s = h s_1 \cdots s_n$ with $h \in \xF \cup (\xV \setminus \fv{\ell})$
and its type by a sort. By definition, $h \notin \dom{\theta}$, so
$s\theta = h \cdot s_1\theta \cdots s_n\theta$. The induction hypothesis
yields
$\im{\theta|_{\fv{s_i}}} \subseteq \nfbel$ for all $1 \leq i \leq n$.
Together with $\fv{s} = \bigcup_{i=1}^n \fv{s_i}$ we obtain
$\im{\theta|_{\fv{s}}} \subseteq \nfbel$.
\item
Let $s = x s_1 \cdots s_n$ where $x \in \fv{\ell}$.
Since $s$ is the subterm of a left-hand side of an STMRS rule,
$n = \ar(x)$. By assumption, $x \in \dom{\theta}$, so
$\theta(x) = \langle \seq{y} \rangle u$. Since $\ell$ is an EPAT, we know
that $\fv{s_i} \cap \dom{\theta} = \varnothing$ and therefore
$s_i = s_i\theta$ for all $1 \leq i \leq \ar(x)$. Hence,
$s\theta = u\SET{y_1 \mapsto s_1,\,\dots,\,y_n \mapsto s_n}$. Since
$s\theta \in \nfbel$ and $s_i$ is not an abstraction for all
$1 \leq i \leq \ar(x)$, also $u \in \nfbel$. \qedhere
\end{itemize}
\end{proof}

\begin{lemma}
\label{lem:minuslam}
Let $\ell \RR r$ be a well-behaved \textup{STMRS} rule.
If $\im{\gamma} \subseteq \nfbel$ then
$\ell(\gamma^{-\lambda})^\lambda\lnf = \ell\gamma\lnf$
\end{lemma}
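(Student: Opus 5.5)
The plan is to show that the substitution $(\gamma^{-\lambda})^\lambda$ agrees with $\gamma$ up to $\eta$-expansion of variables. Then the two sides of the claimed equation are $\beta\eta$-equivalent terms, and both are in $\nfbel$. Uniqueness of $\beta\eta$-long normal forms then gives the equality. We work with the domain $\fv{\ell}$, on which $\gamma^{-\lambda}$ is defined.

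First I compute $(\gamma^{-\lambda})^\lambda(x)$ for $x \in \fv{\ell}$ directly from the two definitions.

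\emph{Case $x \in \dom{\gamma}$.} Since $\gamma(x) \in \nfbel$ and $x$ has type $\sigma_1 \to \cdots \to \sigma_n \to \iota$, we may write $\gamma(x) = \lambda x_1,\dots,x_n.u$ with exactly $n$ leading abstractions. Here $n \geq \ar(x)$, because $\ar(x)$ arguments are applied to $x$ in $\ell$ and the type is respected. Then $\gamma^{-\lambda}(x) = \langle x_1,\dots,x_{\ar(x)}\rangle \lambda x_{\ar(x)+1},\dots,x_n.u$. Applying ${}^\lambda$ just turns the meta-abstraction back into ordinary abstraction, so $(\gamma^{-\lambda})^\lambda(x) = \gamma(x)$, syntactically.

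\emph{Case $x \in \fv{\ell}\setminus\dom{\gamma}$.} This is the second clause of the definition; I read its side condition $\dom{\theta}$ as $\dom{\gamma}$. Here $(\gamma^{-\lambda})^\lambda(x) = \lambda x_1,\dots,x_n.x\cdot x_1\eex\cdots x_n\eex = x\eex$.

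Hence $(\gamma^{-\lambda})^\lambda = \gamma|_{\fv{\ell}} \cup \SET{x \mapsto x\eex \mid x \in \fv{\ell}\setminus\dom{\gamma}}$. Restricting $\gamma$ to $\fv{\ell}$ is harmless, since only free variables of $\ell$ are affected.

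Next I compare the images. For every $x \in \fv{\ell}$, the term $(\gamma^{-\lambda})^\lambda(x)$ is $\eta$-equivalent to $\gamma(x)$, where we take $\gamma(x)=x$ if $x \notin \dom{\gamma}$; indeed $x\eex =_\eta x$. Substitution is compatible with $=_{\beta\eta}$ pointwise on the images, since $=_{\beta\eta}$ is a congruence closed under substitution. Therefore $\ell(\gamma^{-\lambda})^\lambda =_{\beta\eta} \ell\gamma$. Using simultaneous substitution avoids any interference from variables of $\fv{\ell}\setminus\dom{\gamma}$ that occur free in $\im{\gamma}$.

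Finally, both $\ell(\gamma^{-\lambda})^\lambda\lnf$ and $\ell\gamma\lnf$ are in $\nfbel$, by confluence and termination of $\beta$ together with closure of $\nfel$ under $\beta$-reduction. I then invoke the standard fact that every $\beta\eta$-equivalence class of simply-typed terms contains exactly one term in $\nfbel$, namely $s\bnf\eex$. This yields the claimed equality.

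The step I expect to need the most care is this last appeal to uniqueness. Restricted $\eta$-expansion followed by $\beta$-normalisation must compute the canonical representative of the $\beta\eta$-class, i.e.\ $s =_{\beta\eta} t$ must imply $s\lnf = t\lnf$. I would cite this from the literature on $\eta$-long forms~\cite{A93,vR96}. Alternatively, one can argue that $x\eex$ applied to arguments $\beta$-reduces to $x$ applied to the $\eta$-long forms of those arguments; this shows directly that replacing $x$ by $x\eex$ changes nothing after taking $\lnf$.

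The well-behavedness hypothesis is not really needed here beyond guaranteeing $n \geq \ar(x)$, so that $\gamma^{-\lambda}$ is well defined.
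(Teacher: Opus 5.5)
Your proof is correct, but it takes a different route from the paper's.

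\textbf{The paper's route.} The paper argues by structural induction over the well-behaved left-hand side $\ell$. Abstractions, and sort-typed applications whose head lies outside $\fv{\ell}$, are handled by the induction hypothesis. This uses that $\lnf$ commutes with these constructors, which is where the sort-typing from well-behavedness enters. For a subterm $x s_1\cdots s_n$ with $x \in \dom{\gamma}$, the paper uses the EPAT condition to get $s_i\gamma = s_i = s_i(\gamma^{-\lambda})^\lambda$. For $x \notin \dom{\gamma}$, it uses the local identity $(x\eex s_1\cdots s_n)\lnf = (x s_1\cdots s_n)\lnf$.

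\textbf{Your route.} You observe once and for all that $(\gamma^{-\lambda})^\lambda$ agrees with $\gamma$ on $\fv{\ell}\cap\dom{\gamma}$. It maps every other free variable $x$ of $\ell$ to $x\eex$. Hence $\ell(\gamma^{-\lambda})^\lambda$ and $\ell\gamma$ are $\eta$-equivalent. You then conclude from the standard fact that $s =_{\beta\eta} t$ implies $s\lnf = t\lnf$.

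\textbf{What each buys.} Your argument is shorter. It also shows the statement holds for any term $\ell$ satisfying the arity condition, with neither well-behavedness nor the EPAT condition needed. The paper's argument stays within the same inductive format as \lemref{lhs} and \lemref{both}. It invokes uniqueness of long normal forms only implicitly, in the special instance of the unassigned-variable case.

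\textbf{Two small corrections.} First, the bound $n \geq \ar(x)$ does not come from well-behavedness. It follows from typing together with $\gamma(x) \in \nfbel$: an $\eta$-long term has exactly as many leading abstractions as its type has arrows, and $x$ is applied to $\ar(x)$ arguments in $\ell$. Second, your justification that both sides lie in $\nfbel$ points to the wrong fact. Here $\lnf$ is $\bnf$ followed by $\eex$, so the relevant property is that restricted $\eta$-expansion preserves $\beta$-normality, not closure of $\nfel$ under $\beta$-reduction. In any case, this point is already subsumed by the cited characterization $s =_{\beta\eta} t \iff s\lnf = t\lnf$.
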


\begin{proof}
By definition, $\dom{\gamma} \subseteq \dom{\gamma^{-\lambda}}$. Let $\ar$
be the arity function induced by $\ell \RR r$. We prove the result by
induction on subterms $s$ of $\ell$ with respect to the definition
of well-behaved terms.
\begin{itemize}
\item
Let $s = \lambda x.v$. Without loss of generality, $x$ is fresh for
$(\gamma^{-\lambda})^\lambda$, so $s(\gamma^{-\lambda})^\lambda\lnf =
\lambda x.v(\gamma^{-\lambda})^\lambda\lnf =
\lambda x.v\gamma\lnf = s\gamma\lnf$ follows from the induction
hypothesis.
\item
Let $s = h s_1 \cdots s_n$ with $h \in \xF \cup (\xV \setminus \fv{\ell})$
and its type be a sort. By definition, $h \notin \dom{\theta}$, so
\begin{align*}
s(\gamma^{-\lambda})^\lambda\lnf
&= h (s_1(\gamma^{-\lambda})^\lambda\lnf) \cdots
(s_n(\gamma^{-\lambda})^\lambda\lnf) \\
&= h (s_1\gamma\lnf) \cdots (s_n\gamma\lnf) \tag{IH} \\
&= s\gamma\lnf
\end{align*}
as desired.
\item
Let $s = x s_1 \cdots s_n$ where $x \in \fv{\ell}$ and
$x\eex = \lambda \seq[m]{y}.x (y_1\eex) \cdots (y_m\eex)$ with $m \geq n$.
If $x \in \dom{\gamma}$ then $\gamma(x) = (\gamma^{-\lambda})^\lambda(x) =
\lambda \seq[m]{y}.u \in \nfbel$ by assumption. By definition EPATs,
$s_i(\gamma^{\lambda})^\lambda = s_i\gamma = s_i$ for all
$1 \leq i \leq n$. Hence,
$s(\gamma^{-\lambda})^\lambda = s\gamma$ and therefore
$s(\gamma^{-\lambda})^\lambda\lnf = s\gamma\lnf$ as desired.
Now consider $x \notin \dom{\gamma}$. By definition
$(\gamma^{-\lambda})^\lambda(x) = x\eex$. Hence,
\[
s(\gamma^{-\lambda})^\lambda\lnf
= (x\eex s_1 \cdots s_n)\lnf
= (x s_1 \cdots s_n)\lnf
= s\gamma\lnf
\]
as desired. \qedhere
\end{itemize}
\end{proof}

Next, we show the connection between the application
of a meta-substitution and its corresponding ordinary
substitution on a (quasi-)well-behaved term. This is
needed in order to synchronize the resulting terms
after performing the same step with a well-behaved
STMRS and its corresponding EPRS.

\begin{lemma}
\label{lem:both}
Let $\ell \RR r$ be a well-behaved \textup{STMRS} rule and $\theta$ a
corresponding meta-substitution. If $\im{\theta} \subseteq \nfbel$ then
$t\theta^\lambda\bnf = t\theta\bnf \in \nfbel$ for $t \in \SET{\ell, r}$.
\end{lemma}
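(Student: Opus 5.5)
Both claims will be proved together by structural induction on subterms $s$ of $t \in \SET{\ell,r}$, following the clauses of the definition of (quasi-)well-behaved terms with respect to $\fv{\ell}$ and $\ar$. Since $\ell$ is well-behaved, it is in particular quasi-well-behaved, so one induction over quasi-well-behaved terms handles both $\ell$ and $r$. The induction hypothesis will be $s\theta^\lambda\bnf = s\theta\bnf \in \nfbel$. As a first step, I note that $s\theta \RRab{\beta}{*} \cdot$ and $s\theta^\lambda \RRab{\beta}{*} \cdot$ have a common reduct. Indeed, $s\theta$ is obtained from $s\theta^\lambda$ by contracting, at each occurrence of $x \in \dom{\theta}$, the $\ar(x)$ outermost $\beta$-redexes created by substituting $\lambda\seq[\ar(x)]{y}.u$. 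Hence $s\theta^\lambda \RRab{\beta}{*} s\theta$, and confluence of $\beta$ gives $s\theta^\lambda\bnf = s\theta\bnf$ immediately. I would state this as a small auxiliary claim, proved by a routine induction on $s$ matching the clauses of meta-substitution application. After this step, only membership $s\theta\bnf \in \nfbel$ remains to be shown, i.e.\ that the normal form is $\eta$-long.

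For the $\eta$-long part, I handle each clause of the definition. The abstraction case follows directly from the induction hypothesis, after choosing the bound variable fresh. In the case $h s_1 \cdots s_n$ with $h \in \xF \cup (\xV\setminus\fv{\ell})$, the type is a sort. Therefore $s\theta\bnf = h\,(s_1\theta\bnf)\cdots(s_n\theta\bnf)$ is a head term of sort type with $\eta$-long arguments, which lies in $\nfbel$. The substantial case is $s = x s_1 \cdots s_n$ with $x \in \fv{\ell}$ and $\theta(x) = \langle \seq[\ar(x)]{y}\rangle u$. Here I compute
\[ s\theta^\lambda\bnf = \bigl((\lambda \seq[\ar(x)]{y}.u)\,(s_1\theta\bnf)\cdots(s_n\theta\bnf)\bigr)\bnf. \]
By the induction hypothesis each $s_i\theta\bnf \in \nfbel$, although for $i \le \ar(x)$ we only know this via quasi-well-behavedness of the $s_i$. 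Also, $\lambda\seq[\ar(x)]{y}.u \in \nfbel$ by assumption on $\im{\theta}$. Then, since $\nfel$ is closed under substitution and $\beta$-reduction, the whole term, an application of $\eta$-long terms, reduces to an $\eta$-long $\beta$-normal form. One way to phrase this is to view it as $(z\,s_1\cdots s_n)\gamma$ with $z$ fresh and $\gamma = \SET{z\mapsto\theta^\lambda(x)}\cup\dots$, and then apply \lemref{nfbelsubst}. The subtle point is that $z\,(s_1\theta\bnf)\cdots(s_n\theta\bnf)$ must itself be in $\nfbel$ to invoke \lemref{nfbelsubst}. This requires $n$ to equal the full arity of the type of $x$, which holds because $s$ is quasi-well-behaved and hence fully applied up to its sort type. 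It also requires every argument to be $\eta$-long, which is exactly the induction hypothesis.

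I expect the main obstacle to be the quasi-well-behaved clause for $r$. In that clause, the arguments $s_i$ with $i \le \ar(x)$ need not be $\eta$-long; they may be $\eta$-reduced at the root, for instance $\m{g}$ instead of $\lambda w.\m{g}\,w$. So I cannot simply claim $s_i\theta\bnf \in \nfbel$ for those positions. My plan for this case is to strengthen the induction hypothesis to $s\theta\bnf\eex = s\theta\bnf$ up to root $\eta$-expansion: I would show that $s_i\theta\bnf$ is $\eta$-long except possibly at the root. I would then use that $x\eex$-style expansion at the root of an argument can be pushed into the $\beta$-normalization. The relevant fact is $((\lambda y.u)\,v)\bnf = ((\lambda y.u)\,v\eex)\bnf$ whenever $u$ is $\eta$-long and $v$ is $\eta$-long below the root. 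This holds because each occurrence of $y$ in the $\eta$-long term $u$ is fully applied, so an $\eta$-expanded and a non-expanded argument yield the same normal form. With this lemma, the argument positions $\le\ar(x)$ reduce to the fully $\eta$-long situation, and the previous paragraph applies.
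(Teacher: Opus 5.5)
\textbf{Where it differs from the paper.} Your first step takes a different route to the equation from the paper, and it is sound. Meta-substitution is substitution followed by a development, so $s\theta^\lambda \RRab{\beta}{*} s\theta$ holds for every subterm of any STMRS rule. Confluence then gives $t\theta^\lambda\bnf = t\theta\bnf$ without using well-behavedness or $\im{\theta}\subseteq\nfbel$. The paper instead carries the equation together with the $\eta$-long claim through an inner induction on $u$. Your separation has the advantage of isolating well-behavedness to the $\eta$-long part.

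\textbf{The step that fails.} The problem is the variable case of the $\eta$-long part. You justify applying \lemref{nfbelsubst} to $z\,(s_1\theta\bnf)\cdots(s_n\theta\bnf)$ by saying that $x\,s_1\cdots s_n$ is ``fully applied up to its sort type'' because it is quasi-well-behaved. That is false. The definition imposes the sort condition only on heads in $\xF\cup(\xV\setminus W)$; variables in $W$ explicitly need not be fully applied. The paper's own examples show this. The induction over the well-behaved $\m{f}\,F\,x$ visits $F : \m{a}\to\m{a}$ with $\ar(F)=0$ and $n=0$. The induction over $\m{f}\,(G\,\m{g})\,x$ visits $G\,\m{g}$, which has type $\m{a}\to\m{a}$. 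For such subterms $z\,(s_1\theta\bnf)\cdots(s_n\theta\bnf)$ is not $\eta$-long, so \lemref{nfbelsubst} does not apply as you state it. More generally, an application of $\eta$-long terms is not $\eta$-long.

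\textbf{Local repair.} The type of $x$ accepts at least $n$ arguments, and $\theta^\lambda(x)\in\nfbel$. Hence $\theta^\lambda(x)=\lambda y_1\ldots y_m.u'$ with $m\geq n$ and $u'\in\nfbel$ of sort type. Therefore $s\theta\bnf = \lambda y_{n+1}\ldots y_m.\,(u'\SET{y_1\mapsto s_1\theta\bnf,\,\dots,\,y_n\mapsto s_n\theta\bnf})\bnf$. Apply \lemref{nfbelsubst} to $u'$, after your root $\eta$-expansion of the positions $i\leq\ar(x)$, and then use closure of $\nfbel$ under abstraction. This is why the paper writes $\theta(x)=\langle y_1,\dots,y_{\ar(x)}\rangle\lambda y_{\ar(x)+1},\dots,y_n.u$ with $u\in\nfbel$, where $u$ may still be an abstraction.

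\textbf{On your root-expansion lemma.} Using it for the arguments $s_i=h_i\,t_1\cdots t_{m_i}$ is a legitimate replacement for the paper's inner induction. Proving it, however, is essentially that inner induction. Every $y_i$ is fully applied in the $\eta$-long $u'$. You also need the arguments of these occurrences to normalize to $\eta$-long terms, so that the inserted expansions $z\eex$ collapse. Finally, if $h_i\in\fv{\ell}$, then $s_i$ is quasi-well-behaved via the unrestricted clause, so the induction hypothesis covers it directly. Your ``$\eta$-long below the root'' argument only handles $h_i\notin\fv{\ell}$.
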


\begin{proof}
Let $\ar$ be the arity function induced by $\ell \RR r$. We prove
the result by induction on subterms $s$ of $t \in \SET{\ell, r}$
with respect to the definition of (quasi-)well-behaved terms.
\begin{itemize}
\item
Let $s = \lambda x.v$. Without loss of generality, $x$ is fresh for
$\theta$. By the induction hypothesis,
$v\theta^\lambda\bnf = v\theta\bnf \in \nfbel$ and therefore
$s\theta^\lambda\bnf = \lambda x.v\theta^\lambda\bnf =
\lambda x.v\theta\bnf = s\theta\bnf \in \nfbel$ as
$\nfbel$ is closed under the abstraction operator.
\item
Let $s = h s_1 \cdots s_n$ and its type be a sort. The induction
hypothesis yields $s_i\theta^\lambda\bnf = s_i\theta\bnf \in \nfbel$
for all $1 \leq i \leq n$. By assumption $h \notin \dom{\theta}$, so
$s\theta^\lambda\bnf =
h (s_1\theta^\lambda\bnf) \cdots (s_n\theta^\lambda\bnf) =
h (s_1\theta\bnf) \cdots (s_n\theta\bnf) = s\theta\bnf \in \nfbel$.
\item
Let $s = x s_1 \cdots s_n$ where $x \in \fv{\ell}$. By assumption
$x \in \dom{\theta}$ and $\im{\theta} \subseteq \nfbel$, so
\[
\theta(x) =
\langle \seq[\ar(x)]{y} \rangle \lambda y_{\ar(x)+1}, \dots, y_n.u
\]
with $u \in \nfbel$.
For $\gamma_1 = \SET{y_1 \mapsto s_1\theta^\lambda\bnf,\,\dots,\,
y_n \mapsto s_n\theta^\lambda\bnf}$ and $\gamma_2 =
\SET{y_1 \mapsto s_1\theta\bnf,\,\dots,\,y_n \mapsto s_n\theta\bnf}$
we have $s\theta^\lambda \RRab{\beta}{*} u\gamma_1$ and
$s\theta \RRab{\beta}{*} u\gamma_2$.
Hence, we are left to show
$u\gamma_1\bnf = u\gamma_2\bnf \in \nfbel$.
If all $s_i$ are well-behaved, we immediately
obtain $\gamma_1 = \gamma_2$ with $\im{\gamma_1} \subseteq \nfbel$
from the induction hypothesis. We can now use \lemref{nfbelsubst}
to obtain $u\gamma_1\bnf = u\gamma_2\bnf \in \nfbel$ as desired.
Otherwise, we prove $u\gamma_1\bnf = u\gamma_2\bnf \in \nfbel$
by an inner induction on $u$.
Let $u = \lambda x.v$. Without loss of generality, $x$ is fresh for
$\gamma_1$ and $\gamma_2$, so $u\gamma_1\bnf = \lambda x.v\gamma_1\bnf =
\lambda x.v\gamma_2\bnf = u\gamma_2\bnf \in \nfbel$ follows
from the inner induction hypothesis. Now let $u = h v_1 \cdots v_k$.
The inner induction hypothesis yields
$v_i\gamma_1\bnf = v_i\gamma_2\bnf \in \nfbel$
for all $1 \leq i \leq k$. If $h \notin \SET{\seq{y}}$ then the result
follows immediately. Otherwise, $h = y_i$ for some $i$.
If $i \geq \ar(x)$ then
\begin{align*}
u\gamma_1\bnf
&= (s_i\theta^\lambda\bnf)(v_1\gamma_1\bnf) \cdots (v_k\gamma_1\bnf) \\
&= \bigl((s_i\theta\bnf)(v_1\gamma_2\bnf) \cdots
(v_k\gamma_2\bnf)\bigr)\bnf \in \nfbel \tag{outer + inner IH} \\
&= u\gamma_2\bnf \in \nfbel
\end{align*}
where we use the fact $\nfel$ is closed under $\beta$-reduction.
Otherwise, by assumption,
$s_i = h_i \cdot t_1 \cdots t_m$ and $h_i \notin \fv{\ell} = \dom{\theta}$.
The outer induction hypothesis yields that
$t_j\theta^\lambda\bnf = t_j\theta\bnf \in \nfbel$ for all
$1 \leq j \leq k$. Hence
\begin{align*}
u\gamma_1\bnf
&= h_i (t_1\theta^\lambda\bnf) \cdots (t_m\theta^\lambda\bnf)
(v_1\gamma_1\bnf) \cdots (v_k\gamma_1\bnf) \\
&= h_i (t_1\theta\bnf) \cdots (t_m\theta\bnf) (v_1\gamma_2\bnf) \cdots
(v_k\gamma_2\bnf) \in \nfbel \tag{outer + inner IH} \\
&= u\gamma_2\bnf \in \nfbel
\end{align*}
as desired. \qedhere
\end{itemize}
\end{proof}

Finally, we are ready to prove our main result.

\main*

\begin{proof}
We prove the implications separately.
\begin{itemize}
\item[$\Rightarrow$]
Let $\ell \RR r \in \xR$ and $\ar$ its induced arity function
such that $s = C[\ell\theta]$ for some meta-substitution $\theta$.
\lemref{metasubstbelnf} yields that $\im{\theta} \subseteq \nfbel$.
Thus, we can use Lemmata \ref{lem:both} and \ref{lem:lhs} to obtain
$\ell\theta^\lambda\lnf = \ell\theta\bnf = \ell\theta$
which means we also have $s = C[\ell\theta^\lambda\lnf]$.
By definition, $s = C[\ell\theta] \RRb{\xR} C[r\theta]$ and
$s = C[\ell\theta^\lambda\lnf] \RbR C[r\theta^\lambda\lnf]$.
Since the type of $\ell$ and $r$ is a sort and $s \in \nfbel$,
$C[u] \in \nfbel$ whenever $u \in \nfbel$. From \lemref{both} we obtain
$r\theta^\lambda\lnf = r\theta\bnf$, so we have
$C[r\theta] \RRab{\beta}{!} C[r\theta^\lambda\lnf]$ as desired.
\smallskip
\item[$\Leftarrow$]
We now view the well-behaved STMRS $\xR$ as an EPRS. Let
$\ell \R r \in \xR$ and $s = C[\ell\gamma\lnf]$ for some substitution
$\gamma$ with $\im{\gamma} \subseteq \nfbel$. Furthermore, let
$\gamma^{-\lambda}$ be the meta-substitution with respect to
$\ell \RR r$ defined from $\gamma$. By definition,
$\im{\gamma^{-\lambda}} \subseteq \nfbel$. Thus, we can use Lemmata
\ref{lem:minuslam}, \ref{lem:both} and \ref{lem:lhs} to obtain
$\ell\gamma\lnf = \ell(\gamma^{-\lambda})^\lambda\lnf =
\ell\gamma^{-\lambda}\bnf = \ell\gamma^{-\lambda}$
which means we also have $s = C[\ell\gamma^{-\lambda}]$.
We can finish the proof as in the previous case. \qedhere
\end{itemize}
\end{proof}

\section{Conclusion}

For terms in $\nfel$, \thmref{main} yields the correspondence
\[
\RRab{\beta}{!} \cdot \RRb{\xR} \cdot \RRab{\beta}{!} \cdot \RRb{\xR}
\cdot \RRab{\beta}{!} \cdots
\qquad =\qquad
\RRab{\beta}{!} \cdot \RbR \cdot \RbR \cdots
\]
of rewrite sequences. Hence, we obtain the following corollary.

\begin{corollary}
\label{cor:result}
Given a well-behaved \textup{STMRS} $\xR$,
$\RRab{\beta}{!} \cdot \RRb{\xR}$ is terminating
on terms in $\nfel$ if and only if
$\Rb{\xR}$ is terminating when viewing $\xR$ as an \textup{EPRS}.
\end{corollary}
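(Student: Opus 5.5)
We derive Corollary~\ref{cor:result} from Theorem~\ref{thm:main}, turning its single-step correspondence into a correspondence between infinite rewrite sequences. Throughout we use two facts about $\beta$-reduction. It is terminating and confluent on simply-typed terms, so $\RRab{\beta}{!}$ is a total function $u \mapsto u\bnf$. Moreover $\nfel$ is closed under $\beta$-reduction, so every $u \in \nfel$ satisfies $u\bnf \in \nfbel$. We prove both directions by contraposition, showing that an infinite sequence on one side yields an infinite sequence on the other.

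\emph{From the EPRS to the STMRS.} Suppose $s_0 \Rb{\xR} s_1 \Rb{\xR} s_2 \Rb{\xR} \cdots$ is infinite. By definition of $\Rb{\xR}$, every $s_i \in \nfbel \subseteq \nfel$. Theorem~\ref{thm:main} (direction $\Leftarrow$) gives, for each $i$, a term $u_i$ with $s_i \RRb{\xR} u_i \RRab{\beta}{!} s_{i+1}$. Since $s_0$ is already $\beta$-normal, $s_0 \RRab{\beta}{!} s_0$, and $s_{i+1} \RRab{\beta}{!} s_{i+1}$ as well. Regrouping gives the infinite sequence
\[
s_0 \RRab{\beta}{!} \cdot \RRb{\xR} u_0 \;\RRab{\beta}{!}\; s_1 \RRb{\xR} u_1 \;\RRab{\beta}{!}\; s_2 \cdots,
\]
that is, an infinite $\RRab{\beta}{!} \cdot \RRb{\xR}$ sequence starting from $s_0 \in \nfel$.

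\emph{From the STMRS to the EPRS.} Suppose $t_0 \RRab{\beta}{!} s_0 \RRb{\xR} t_1 \RRab{\beta}{!} s_1 \RRb{\xR} t_2 \cdots$ is infinite with $t_0 \in \nfel$. By induction on $i$ we show that $t_i \in \nfel$ and $s_i = t_i\bnf \in \nfbel$. The base case holds by assumption. For the step we know $s_i \in \nfbel$, so Theorem~\ref{thm:main} (direction $\Rightarrow$) applies to $s_i \RRb{\xR} t_{i+1} \RRab{\beta}{!} t_{i+1}\bnf$. It yields $s_i \Rb{\xR} t_{i+1}\bnf$, and this value is $s_{i+1}$ because $\beta$-normal forms are unique. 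The relation $\Rb{\xR}$ lands in $\nfbel$, so $s_{i+1} \in \nfbel$.

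The one delicate point is the claim $t_{i+1} \in \nfel$, which we need to keep the chain inside $\nfel$. We avoid it entirely: the induction only ever uses $s_i \in \nfbel$, and each $s_i$ is obtained directly as a $\beta$-normal form. It therefore suffices that the theorem holds for every $s \in \nfbel$, which it does. The result is an infinite sequence $s_0 \Rb{\xR} s_1 \Rb{\xR} \cdots$.

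The main obstacle, and the only place needing care, is this bookkeeping of normal forms. We must ensure that the intermediate $\RRab{\beta}{!}$ target is exactly the EPRS successor, which follows from confluence and uniqueness of $\beta$-normal forms. We must also ensure that the first prefix $\RRab{\beta}{!}$ moves a term in $\nfel$ into $\nfbel$, so that the hypothesis of Theorem~\ref{thm:main} is met.
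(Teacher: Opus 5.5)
Your proposal is correct and is the paper's argument: the paper just chains Theorem~\ref{thm:main} step by step to match $\RRab{\beta}{!} \cdot \RRb{\xR}$ sequences starting in $\nfel$ with $\Rb{\xR}$ sequences, and you make explicit the bookkeeping it leaves implicit (closure of $\nfel$ under $\beta$ for the initial normalization, uniqueness of $\beta$-normal forms to identify the successor, and the fact that $\Rb{\xR}$ only relates terms in $\nfbel$, so intermediate terms need not stay in $\nfel$). The only blemish is presentational: you announce an induction showing $t_i \in \nfel$ and then drop that claim, so state the invariant as $s_i = t_i\bnf \in \nfbel$ from the start.
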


This lays the foundation for an EPRS subcategory of the higher-order
category in TermCOMP: One simply has to make sure to only use
well-behaved STMRSs by filtering TPDB accordingly or automatically
transforming all STMRS benchmarks to well-behaved STMRSs.
Given a well-behaved STMRS, termination of the corresponding
EPRS coincides with termination of normalized rewriting \cite{M96} of
the given STMRS with respect to $\beta$-reduction.
For second-order STMRSs, meta-substitution cannot create $\beta$-redexes,
so on terms in $\nfbel$ there is even a one-to-one correspondence between
steps in a well-behaved STMRS and its corresponding EPRS.

Regarding future work, one could consider dropping the restriction
to terms in $\nfel$ in \corref{result} along the lines of
the development in \cite[Section 2.3.2]{K12}.
Furthermore, it could be worthwhile to generalize \thmref{main}
to more general subclasses of HRSs. Note that it will not
be possible to go for all HRSs 
since meta-substitution is strictly less powerful than
full higher-order substitution (see e.g.\ \exaref{stmrsVSeprs}).
Finally, by establishing the implementation result of \thmref{main}
instead of a direct proof of \corref{result}, we also provide
a solid foundation for transferring properties other than
termination from EPRSs to STMRSs.

\bibliography{references}

\end{document}